\newcommand{\multiline}[1]{%
  \begin{tabularx}{\dimexpr\linewidth-\ALG@thistlm}[t]{@{}X@{}}
    #1
  \end{tabularx}
}
\newtheorem{remark}{Remark}
\newtheorem{proposition}{Proposition}
\newtheorem{conjecture}{Conjecture}
\newenvironment{proofN3}{%
  \proof}{\endproof}
\begin{document}

\title{
 Linear Coding for Gaussian Two-Way Channels
}
\author{\IEEEauthorblockN{Junghoon Kim\IEEEauthorrefmark{1}, Seyyedali Hosseinalipour\IEEEauthorrefmark{1},
Taejoon Kim\IEEEauthorrefmark{2},
David J. Love\IEEEauthorrefmark{1} 
and Christopher G. Brinton\IEEEauthorrefmark{1}}
\IEEEauthorblockA{\IEEEauthorrefmark{1}Electrical and Computer Engineering, Purdue University, West Lafayette, IN, USA}
\IEEEauthorblockA{\IEEEauthorrefmark{2}Electrical Engineering and Computer Science, University of Kansas, Lawrence, KS, USA}
\IEEEauthorblockA{\IEEEauthorrefmark{1}\{kim3220, hosseina, djlove, cgb\}@purdue.edu, 
\IEEEauthorrefmark{2}taejoonkim@ku.edu}
\thanks{This work was supported by the Office of Naval Research (ONR) under Grants N00014-21-1-2472 and N00014-22-1-2305, and
the National Science Foundation (NSF) under Grants CNS 2225577 and 2212565.}
}
\maketitle

\begin{abstract}

We consider linear coding for Gaussian two-way channels (GTWCs), in which
each user generates the transmit symbols by linearly encoding both its message and the past received symbols (i.e., the feedback information) from the other user.
In Gaussian one-way channels (GOWCs), Butman has proposed a well-developed model for linear encoding that encapsulates feedback information into transmit signals. 
However,
such a model for GTWCs has not been well studied since the coupling of the encoding
processes at the users in GTWCs renders the encoding design non-trivial and challenging.
In this paper, we aim to fill this gap in the literature by extending the existing signal models in GOWCs to GTWCs.
With our developed signal model for GTWCs,
 we 
 formulate an optimization problem to jointly design the encoding/decoding schemes for both the users, aiming to minimize the weighted sum of their transmit powers under signal-to-noise ratio constraints.
First, we derive an optimal form of the linear decoding schemes under any arbitrary encoding schemes employed at the users.
Further, we 
provide new insights on the encoding design for GTWCs.
 In particular, we show that it is optimal that one of the users (i) does not transmit the feedback information to the other user at the last channel use, and (ii) transmits its message only over the last channel use. 
 With these solution behaviors, we further simplify the problem and solve it via an iterative two-way optimization scheme.
We numerically demonstrate that our proposed scheme for GTWCs achieves a better performance 
in terms of the transmit power 
compared to the existing counterparts, such as the non-feedback scheme and  one-way optimization scheme.

\end{abstract}

\section{Introduction}

The two-way channel was first studied by 
Shannon \cite{shannon1961two}, where two users exchange their messages with each other through their separate channels.
In this paper, we consider Gaussian two-way channels (GTWCs), where  Gaussian  noise is added independently to each way of the channels between the users.
Han in \cite{han1984general} showed that 
incorporating feedback information (i.e., the previously received symbols) into
transmit symbols for encoding does not increase 
the capacity of GTWCs. Nevertheless, it has been shown that feedback can improve the communication reliability of Gaussian channels~\cite{schalkwijk1966coding,chance2011concatenated,agrawal2011iteratively,kim2020deepcode,palacio2021achievable,mishra2021linear,butman1969general}.

For Gaussian one-way channels (GOWCs), the seminal work done by Schalkwijk and Kailath (S-K) in \cite{schalkwijk1966coding} introduced a simple linear encoding that can achieve doubly exponential decay in the probability of error upon having noiseless feedback information. 
In \cite{chance2011concatenated}, Chance and Love~proposed a linear encoding scheme for GOWCs with noisy feedback, which is further analyzed and revealed to be the optimal linear encoding scheme by~\cite{agrawal2011iteratively}.
In \cite{kim2020deepcode}, Kim. \textit{et al.} exploited deep learning for a non-linear coding in GOWCs and showed performance improvements in the error probability.

For GTWCs, several recent works have revealed the advantages of feedback in terms of improving  communication reliability.
In \cite{palacio2021achievable}, Palacio-Baus and Devroye
showed that feedback can improve the error exponent as compared to the non-feedback case.
In \cite{vasaldynamic}, Vasal suggested a dynamic programming (DP)-based methodology for encoding in GTWCs.
Although the effectiveness of the DP approach in GTWCs has not been verified, the author's previous work \cite{mishra2021linear} 
revealed
that the DP approach is 
effective in GOWCs with noisy feedback.

To the best of our knowledge, 
a general system model for linear encoding in GTWCs
has not been well studied,
unlike the well-developed counterpart for GOWCs proposed by Butman~\cite{butman1969general}. Furthermore, designing the linear encoding schemes for GTWCs is a non-trivial process since the \textit{coupling} of the encoding processes at the users should be encapsulated in the system model. 
In this paper, we aim to bridge the gaps between the two pieces of literature on GOWCs and GTWCs. To this end, we propose a general system model for linear coding in GTWCs by extending the existing formulations in GOWC literature~\cite{butman1969general, chance2011concatenated, agrawal2011iteratively} to GTWCs.

Furthermore, using our developed signal model for GTWCs, we  define the signal-to-noise
ratio (SNR) at the users, and then
derive an optimal form of the linear decoding schemes 
by maximizing the SNRs
under arbitrary encoding schemes employed at the users.
We then formulate the weighted sum transmit power minimization problem to satisfy arbitrary SNR thresholds,
aiming to jointly optimize the encoding/decoding schemes of the users.
To mitigate the coupling effect caused by encoding processes at the users, we assume that one of the users (i.e., User 2) feeds back only recently received signal.
Under this assumption, we theoretically characterize the optimal solution for a part of the
encoding schemes.
In particular, we first prove that it is optimal for one of the users (i.e., User 2) not to utilize the last channel use for feeding back the previously received signals to  the other user (i.e., User 1).
Second, based on our conjecture, we claim that it is optimal for User 2 to transmit the message only over the last channel use. 
From our theoretical insights on the encoding and decoding design,
we further simplify the optimization problem and propose an iterative
two-way optimization scheme to solve it.
Through numerical experiments, we reveal that our proposed two-way optimization scheme 
outperforms the open loop (i.e., non-feedback) and the one-way optimization schemes.


\section{System Model in Gaussian Two-way Channels}


\begin{figure}[t]
    \centering
    \includegraphics[width=\linewidth]{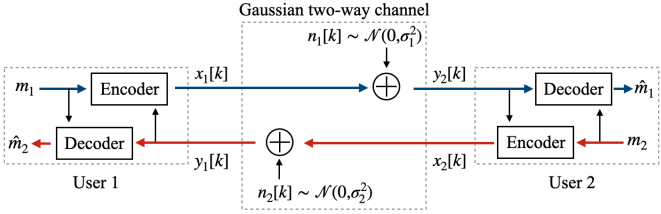}
    \caption{System model for Gaussian two-way channels.}
    \label{fig:system}
\end{figure}

We consider a two-way channel between two users, User 1 and User 2, as shown in Fig. \ref{fig:system}. We assume that User 1 and 2 perform linear encoding and decoding of blocklength $N$.
Let $k\in\{1,\cdots, N\}$ denote the index of channel use, and $x_1[k]$ and $x_2[k]$ represent the transmit signals at User 1 and User 2, respectively, at time $k$.
We consider additive white Gaussian noise (AWGN) channels between the users. Subsequently,
the received signal at User 2, ${y}_2[k]$, and User 1, ${y}_1[k] $, at time $k$ are given by
\begin{align}
    {y}_2[k] & = x_1[k] + n_1[k],
    \\
    {y}_1[k] & = x_2[k] + {n}_2[k],
\end{align}
respectively, 
where ${n}_1[k] \sim \mathcal{N}(0,\sigma_1^2)$ and ${n}_2[k] \sim \mathcal{N}(0,\sigma_2^2)$ are Gaussian noises. 
Considering signal exchange over the blocklength of $N$, we represent the received signals at User 2 and User 1 in vector form as ${\bf y}_2 = [y_2[1], ..., y_2[N]]^\top \in \mathbb{R}^{N \times 1}$ and ${\bf y}_1 = [y_1[1], ..., y_1[N]]^\top \in \mathbb{R}^{N \times 1}$, respectively, given by
\begin{align}
    {\bf y}_2 & = {\bf x}_1 + {\bf n}_1,
    \label{eq:y2}
    \\
    {\bf y}_1 & = {\bf x}_2 + {\bf n}_2,
    \label{eq:y1}
\end{align}
where ${\bf x}_i = [x_i[1], ..., x_i[N]]^\top$ and ${\bf n}_i = [n_i[1], ..., n_i[N]]^\top$, $i\in\{1,2\}$.

The goal of signal exchange among the users is to transmit the \textit{message} available at each user to the other. In particular, each User $i$, $i\in\{1,2\}$, aims to transmit a unique message $m_i \in \mathbb{R}$ to the other user where  $\mathbb{E}[m_i] = 0$ and $\mathbb{E}[\vert m_i \vert^2] = 1$.
%
Motivated by the advantages of incorporating the received signals into transmit signals through feedback in GOWCs,
e.g., enhancing communication reliability~\cite{agrawal2011iteratively,butman1969general,chance2011concatenated, schalkwijk1966coding}, 
we consider a linear coding framework at the users that exploits the feedback in GTWCs.
In our framework, users receive some feedback information
from one another and utilize that to generate their transmit signals.
User $i$
constructs the transmit signal at time $k$, $x_i[k]$, as a function of the message $m_i$ and the received signals up to time $k-1$,~$\{ {y}_i[\tau]\}_{\tau=1}^{k-1}$.

We consider that User $i$, $i \in \{1,2\}$, employs the message encoding vector  $\tilde {\bf g}_i \in \mathbb{R}^{N \times 1}$ for encoding the message $m_i$ and the feedback encoding matrix $\tilde {\bf F}_i \in \mathbb{R}^{N \times N}$ for encoding the received signals.
Note that $\tilde {\bf F}_i$, $i \in \{1,2\}$, is strictly lower triangular (i.e., the matrix entries are zero on and above the diagonal) due to causality of the system.
To avoid feeding back redundant information,
 we consider that each user removes the contribution of its known prior transmitted signals  from the received signals to generate its future transmit signals.
For the case of User 1, the transmit signal ${\bf x}_1$ is encoded by User 2 with $\tilde {\bf F}_2$ and then transmitted back to User 1. Therefore,
 User 1 subtracts its signal portion ${\bf x}_1$ from the receive signal ${\bf y}_1$ as ${\bf y}_1 - \tilde {\bf F}_2 {\bf x}_1$.
It is obvious that using the feedback information ${\bf y}_1$ is equivalent to using the modified feedback information ${\bf y}_1 - \tilde {\bf F}_2 {\bf x}_1$.
Similarly, User 2 subtracts its signal portion ${\bf x}_2$ from ${\bf y}_2$ and obtains the modified feedback information ${\bf y}_2 - \tilde {\bf F}_1 {\bf x}_2$. 
The transmit signals of the users are then given by
\begin{align}
    {\bf x}_1 & = \tilde {\bf g}_1 m_1 + \tilde {\bf F}_1 ({\bf y}_1 - \tilde {\bf F}_2 {\bf x}_1),
    \label{eq:x1:tilde}
    \\
    {\bf x}_2 & = \tilde {\bf g}_2 m_2 + \tilde {\bf F}_2 ({\bf y}_2 - \tilde {\bf F}_1 {\bf x}_2).
    \label{eq:x2:tilde}
\end{align}


Since each of the users transmits the signals encapsulating the received signals   from the other over the multiple channel uses, a \textit{coupling} occurs between the transmit signals at the users. 
To mitigate the coupling effects in the signal representation,
we rewrite the signal model in
\eqref{eq:x1:tilde}-\eqref{eq:x2:tilde} as
\begin{align}
    {\bf x}_1 & = {\bf g}_1 m_1 +  {\bf F}_1 ({\bf y}_1 - {\bf F}_2 {\bf x}_1),
    \label{eq:x1}
    \\
    {\bf x}_2 & = {\bf g}_2 m_2 + {\bf F}_2 {\bf y}_2,
    \label{eq:x2}
\end{align}
by expressing ${\bf g}_1$, ${\bf F}_1$, ${\bf g}_2$, and ${\bf F}_2$ as functions of $\tilde {\bf g}_1$, $\tilde  {\bf F}_1$, $\tilde  {\bf g}_2$, and $\tilde  {\bf F}_2$.
Specifically, we can reformulate the equation in \eqref{eq:x2:tilde} and obtain 
\begin{align}
    {\bf x}_2 = ({\bf I} + \tilde {\bf F}_2 \tilde {\bf F}_1)^{-1}\tilde {\bf g}_2 m_2 + ({\bf I} + \tilde {\bf F}_2 \tilde {\bf F}_1)^{-1}\tilde {\bf F}_2 {\bf y}_2.
    \label{eq:x2:reform}
\end{align}
By comparing the equations in \eqref{eq:x2} and \eqref{eq:x2:reform}, we can find ${\bf g}_2 = ({\bf I} + \tilde {\bf F}_2 \tilde {\bf F}_1)^{-1} \tilde {\bf g}_2$ and ${\bf F}_2 = ({\bf I} + \tilde {\bf F}_2 \tilde {\bf F}_1)^{-1}\tilde {\bf F}_2$. Similarly, we can rewrite the equation in \eqref{eq:x1:tilde} as
\begin{align}
    {\bf x}_1 & =  \tilde {\bf g}_1 m_1 + \tilde {\bf F}_1 ({\bf y}_1 - \tilde {\bf F}_2 {\bf x}_1 - {\bf F}_2 {\bf x}_1 +  {\bf F}_2 {\bf x}_1)
    \nonumber
    \\
    & = \big({\bf I} - \tilde {\bf F}_1 ( {\bf F}_2 - \tilde {\bf F}_2) \big)^{-1} \tilde {\bf g}_1 m_1
    \nonumber
    \\
    & \hspace{1cm}
    + \big({\bf I} - \tilde {\bf F}_1 ( {\bf F}_2 - \tilde {\bf F}_2) \big)^{-1} \tilde {\bf F}_1 ({\bf y}_1  - {\bf F}_2 {\bf x}_1).
    \label{eq:x1:reform}
\end{align}
By comparing the equations in \eqref{eq:x1} and \eqref{eq:x1:reform}, we can find ${\bf g}_1 = {\bf A}^{-1} \tilde {\bf g}_1$ and ${\bf F}_1 = {\bf A}^{-1} \tilde {\bf F}_1$, where ${\bf A} 
= {\bf I} - \tilde {\bf F}_1 ( {\bf F}_2 - \tilde {\bf F}_2) 
= {\bf I} - \tilde {\bf F}_1 ( ({\bf I} + \tilde {\bf F}_2 \tilde {\bf F}_1)^{-1} - {\bf I}) \tilde {\bf F}_2$.
Note that both ${\bf F}_1$ and ${\bf F}_2$ are  strictly lower triangular.

Henceforth, we aim to design ${\bf g}_1$, ${\bf F}_1$, ${\bf g}_2$, and ${\bf F}_2$ and focus on the signal representation in \eqref{eq:x1}-\eqref{eq:x2}.
Accordingly, we rewrite the received signal expressions in \eqref{eq:y2}-\eqref{eq:y1} as
\begin{align}
    {\bf y}_1 & = {\bf g}_2 m_2 + {\bf F}_2 {\bf y}_2  + {\bf n}_2,
    \label{eq:y1:vec}
    \\
    {\bf y}_2 
    & = {\bf g}_1 m_1 +  {\bf F}_1 ({\bf y}_1 - {\bf F}_2 {\bf x}_1) + {\bf n}_1
    \nonumber
    \\
    & = {\bf g}_1 m_1  + {\bf F}_1 {\bf g}_2 m_2 +
    ({\bf I} + {\bf F}_1{\bf F}_2){\bf n}_1 + {\bf F}_1{\bf n}_2.
    \label{eq:y2:vec}
\end{align}

Considering the received signals \eqref{eq:y1:vec}-\eqref{eq:y2:vec},  the transmit signals in  \eqref{eq:x1}-\eqref{eq:x2} can be written as the sum of the messages and noises as follows:
\begin{align}
    {\bf x}_1 
    & = {\bf g}_1 m_1  + {\bf F}_1 ({\bf g}_2 m_2 +
    {\bf F}_2{\bf n}_1 +{\bf n}_2),
    \\
    {\bf x}_2
    & = {\bf g}_2 m_2 + {\bf F}_2 ({\bf g}_1 m_1  + {\bf F}_1 {\bf g}_2 m_2 +
    ({\bf I} + {\bf F}_1{\bf F}_2){\bf n}_1 + {\bf F}_1{\bf n}_2)
    \nonumber
    \\
    & = ({\bf I} + {\bf F}_2 {\bf F}_1 ) {\bf g}_2 m_2 +  {\bf F}_2{\bf g}_1 m_1  
    \nonumber
    \\
    & \hspace{2.65cm} 
    + {\bf F}_2({\bf I} + {\bf F}_1{\bf F}_2){\bf n}_1 + {\bf F}_2{\bf F}_1{\bf n}_2.
\end{align}
Using the above two expressions, we formulate the transmit power of the users as
\begin{align}
    \mathbb{E}\big[\|{\bf x}_1\|^2\big] 
    &= \|{\bf g}_1\|^2 +  \|{\bf F}_1 {\bf g}_2 \|^2 +  \|{\bf F}_1 {\bf F}_2 \|_F^2  \sigma_1^2
    + \| {\bf F}_1 \|_F^2  \sigma_2^2,
    \label{eq:power1}
    \\
    \mathbb{E} \big[\|{\bf x}_2\|^2\big] 
    &= \| ({\bf I} + {\bf F}_2 {\bf F}_1 ) {\bf g}_2\|^2 + \|{\bf F}_2{\bf g}_1\|^2
    \nonumber
    \\
    & \hspace{.7cm}
    + \| {\bf F}_2( {\bf I} + {\bf F}_1{\bf F}_2 ) \|_F^2 \sigma_1^2
    + \| {\bf F}_2{\bf F}_1 \|_F^2 \sigma_2^2, 
    \label{eq:power2}
\end{align}
where the messages and the noises are assumed to be uncorrelated to each other. 
In the following section, we define SNRs of the users and obtain the optimal decoding schemes by maximizing the SNRs.
We then introduce our encoding design and solution method.


\section{Linear Encoding and Decoding Schemes in GTWC}

\subsection{Design of Optimal Linear Decoding Schemes}
\label{ssec:decoding}

Since the decoding is conducted at each of the users independently, we can use the same technique used in GOWC to find the optimal linear decoding scheme~\cite{chance2011concatenated, agrawal2011iteratively}.
After the $N$ channel uses, 
each user aims to estimate the message of the other user.
We first consider that User 1 estimates $m_2$ with the received signal ${\bf y}_1$ by using a linear combining vector ${\bf w}_{2} \in \mathbb{R}^{N \times 1}$.
By plugging ${\bf y}_2$ (given by \eqref{eq:y2:vec}) in ${\bf y}_1$ (given by \eqref{eq:y1:vec}), we can
rewrite ${\bf y}_1$ as ${\bf y}_1 = {\bf F}_2 {\bf g}_1 m_1 + ({\bf I} + {\bf F}_2 {\bf F}_1){\bf g}_2 m_2 +  {\bf F}_2 ({\bf I} + {\bf F}_1 {\bf F}_2) {\bf n}_1  + ({\bf I} + {\bf F}_2 {\bf F}_1){\bf n}_2$.
Through a pre-processing phase, User 1 
is assumed to 
subtract its message contribution, ${\bf F}_2 {\bf g}_1 m_1$, from ${\bf y}_1$ 
to obtain 
${  {\bf z}_1} = {\bf y}_1 - {\bf F}_2 {\bf g}_1 m_1 = ({\bf I} + {\bf F}_2 {\bf F}_1){\bf g}_2 m_2 +  ({\bf I} + {\bf F}_2 {\bf F}_1) {\bf F}_2 {\bf n}_1  + ({\bf I} + {\bf F}_2 {\bf F}_1){\bf n}_2$.
For estimating $m_1$, using ${ {\bf z}_1}$ is equivalent to using
${ \tilde {\bf y}_1} =  ({\bf I} + {\bf F}_2 {\bf F}_1)^{-1}  {\bf z}_1  =  {\bf g}_2 m_2 + {\bf F}_2 {\bf n}_1  + {\bf n}_2$. 
Using the result of pre-processing,
User 1 obtains the message estimate $ {\hat m}_2 = {\bf w}_{2}^\top \tilde {\bf y}_1$. 
The SNR used
to estimate $m_2$ is
\begin{align}
    {\rm SNR}_2 = \frac{\vert {\bf w}_{2}^\top {\bf g}_2 \vert^2} 
    {{\bf w}_{2}^\top {\bf Q}_2 {\bf w}_{2}},
    \label{eq:SNR2}
\end{align}
where 
\begin{align}
    {\bf Q}_2 & = {\bf F}_2 {\bf F}_2^\top \sigma_1^2 + \sigma_2^2 {\bf I}.
    \label{eq:Q2}
\end{align}

Similarly, we consider that User 2 estimates $m_1$ using the received signal ${\bf y}_2$ via
a linear combining vector ${\bf w}_{1} \in \mathbb{R}^{N \times 1}$. In pre-processing, 
User 2 is assumed to subtract its message contribution from
${\bf y}_2$ in \eqref{eq:y2:vec}, and obtains 
$\tilde {\bf y}_2 =  {\bf y}_2 - {\bf F}_1 {\bf g}_2 m_2 = {\bf g}_1 m_1  + ({\bf I} + {\bf F}_1{\bf F}_2){\bf n}_1 + {\bf F}_1{\bf n}_2$.
User 2 then obtains the message estimate $ {\hat m}_1 = {\bf w}_{1}^\top \tilde {\bf y}_2$ under SNR 
\begin{align}
    {\rm SNR}_1 = \frac{\vert {\bf w}_{1}^\top {\bf g}_1 \vert^2} 
    {{\bf w}_{1}^\top {\bf Q}_1 {\bf w}_{1}},
    \label{eq:SNR1}
\end{align}
where 
\begin{align}
    {\bf Q}_1 & = ({\bf I} + {\bf F}_1{\bf F}_2) ({\bf I} + {\bf F}_1{\bf F}_2)^\top \sigma_1^2 
    + {\bf F}_1 {\bf F}_1^\top \sigma_2^2.
    \label{eq:Q1}
\end{align}
%
Given ${\bf g}_1$, ${\bf F}_1$, ${\bf g}_2$, and ${\bf F}_2$, 
the optimal combining vector (that minimizes the error probability for message estimation) is obtained by maximizing the SNRs
given by  \cite{chance2011concatenated, agrawal2011iteratively}
\begin{align}
    {\bf w}^\star_{i} = \frac{{\bf Q}_i^{-1} {\bf g}_i}
    { {\bf g}_i^\top {{\bf Q}_i}^{-1} {\bf g}_i}, \quad i\in\{1,2\}.
    \label{eq:w_opt}
\end{align}
Plugging \eqref{eq:w_opt} in \eqref{eq:SNR2} and \eqref{eq:SNR1}, we obtain the SNR as
\begin{align}
      {\rm SNR}_i =  {\bf g}_i^\top {\bf Q}_i^{-1} {\bf g}_i, \quad i \in \{1,2\}.
      \label{eq:SNR_opt}
\end{align}

\subsection{Optimization Formulation for Linear Encoding Schemes}
\label{ssec:encoding}

The decoding schemes 
in \eqref{eq:w_opt} are represented as functions of the encoding schemes
of the users.
Thus, 
for joint encoding and decoding design, we focus on designing the encoding schemes 
with
the derived form of SNRs in \eqref{eq:SNR_opt}.
In this work, we
minimize the weighted sum of the users' transmit powers  under their SNR constraints. Accordingly, the optimization problem is given by
\begin{align}
  (\bm{\mathcal{P}}): \hspace{0.2cm} & \underset{{\bf g}_1, {\bf F}_1, {\bf g}_2, {\bf F}_2}
    {\text{min}} & & 
 \alpha \mathbb{E}\big[\|{\bf x}_1 \|^2\big] + (1-\alpha) \mathbb{E}\big[\|{\bf x}_2 \|^2\big]
\nonumber
\\
& \text{s.t.}
& &  
{\rm SNR}_1 = \eta_1, \quad
{\rm SNR}_2 = \eta_2,
\label{opt:con}
\end{align}
where $\eta_1,\eta_2\in\mathbb{R}^+$ are the target SNRs and $\alpha\in(0,1)$ is a weighting coefficient.

\begin{remark}
The equality constraints in \eqref{opt:con} are equivalent to inequality constraints ${\rm SNR}_1 \ge \eta_1$ and ${\rm SNR}_2 \ge \eta_2$ in terms of minimizing the objective function of $\bm{\mathcal{P}}$.
This is because if we obtain ${\bf g}_1$ such that ${\rm SNR}_1 > \eta_1$, we can always  choose $\bar {\bf g}_1 = (1-\epsilon){\bf g}_1$ with $\epsilon \in (0,1)$ under which $ {\rm SNR}_1 =  \bar {\bf g}_1^\top {\bf Q}_1^{-1} \bar  {\bf g}_1 = \eta_1$. This choice of $\bar {\bf g}_1$ will yield a smaller transmit power of the users in \eqref{eq:power1} and \eqref{eq:power2}, since $\| \bar {\bf g}_1 \|^2 < \| {\bf g}_1 \|^2$ and $\| {\bf F}_2 \bar {\bf g}_1 \|^2 < \| {\bf F}_2 {\bf g}_1 \|^2$. The same holds for the value of ${\rm SNR}_2$.
\end{remark}

In general, at time $k \ge 2$, 
User $i$
feeds back a linear combination of the previously received signals up to time $k-1$, i.e., $\{ {y}_i[\tau]\}_{\tau=1}^{k-1}$, where
$i \in \{1,2\}$.
This implies that the initially received signals at the users are repetitively fed back to the other over a total of $N$ channel uses, e.g., the information of $y_2[1]$ at User 2 is fed back to User 1 over $N-1$ times.
This repetitive feedback in both ways would make the design of the encoding schemes more complicated because the encoding schemes of the users are coupled.
To mitigate the complexity of designing the encoding schemes, 
we assume that 
User 2 only feeds back the recently received signal of ${\bf y}_2$ in \eqref{eq:x2}, i.e., ${\bf F}_2$ is in the form of
\begin{equation}
{\bf F}_2 = 
\begin{bmatrix}
0 & 0 & 0 & ... & 0\\
f_{2,2} & 0 & 0 & ... & 0 \\
0 & f_{2,3} & 0 & ... & 0  \\
\vdots & \vdots & \ddots & & 0 \\
0 & 0 & ... & f_{2,N} & 0
\end{bmatrix} 
\in \mathbb{R}^{N \times N}.
\label{eq:F2}
\end{equation}

First, we investigate the solution behavior for the feedback of User 2.
Specifically, we reveal
that it is optimal for User 2  not to utilize the last channel use for feeding back the previous received signals to User 1, i.e., $f_{2,N}=0$, for arbitrary encoding schemes.

\begin{proposition}
\label{pro1}
In the problem $\bm{\mathcal{P}}$ with ${\bf F}_2$ in the form of \eqref{eq:F2},
it is optimal that $f_{2,N} = 0$.
\end{proposition}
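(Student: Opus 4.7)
My plan is a direct substitution argument: given any feasible $({\bf g}_1,{\bf F}_1,{\bf g}_2,{\bf F}_2)$ with $f_{2,N}\ne 0$, I will construct another feasible point with $f_{2,N}=0$ whose weighted-sum objective is no larger. The argument rests on two structural observations that isolate $f_{2,N}$, followed by a power comparison that is the real technical content.

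The first observation is that $f_{2,N}$ is invisible to the User~1 side. Since ${\bf F}_1$ is strictly lower triangular, its $N$-th column is the zero vector, and $f_{2,N}$ enters ${\bf F}_1{\bf F}_2$ only through the $(N{-}1)$-th column, which equals $f_{2,N}\cdot({\bf F}_1)_{:,N}=\mathbf{0}$. Therefore ${\bf F}_1{\bf F}_2$, ${\bf Q}_1$, ${\rm SNR}_1$, and $\mathbb{E}[\|{\bf x}_1\|^2]$ are all independent of $f_{2,N}$, so any perturbation of $f_{2,N}$ automatically preserves ${\rm SNR}_1=\eta_1$ and leaves the User~1 power term in the objective unchanged.

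The second observation uses the sub-diagonal form of ${\bf F}_2$: ${\bf F}_2{\bf F}_2^\top$ is diagonal, so ${\bf Q}_2$ is diagonal and ${\rm SNR}_2$ decouples per channel as $\sum_{k=1}^{N}[{\bf g}_2]_k^{2}/(f_{2,k}^{2}\sigma_1^{2}+\sigma_2^{2})$, with $f_{2,1}\equiv 0$. I would define the candidate by setting $f_{2,N}'=0$ together with $[{\bf g}_2']_N^{2}=[{\bf g}_2]_N^{2}\,\sigma_2^{2}/(f_{2,N}^{2}\sigma_1^{2}+\sigma_2^{2})$ while keeping every other parameter fixed. This choice preserves the $N$-th SNR summand exactly, so ${\rm SNR}_2=\eta_2$ continues to hold and the substituted point is feasible.

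The third step is to show $\mathbb{E}[\|{\bf x}_2\|^2]$ does not increase. From \eqref{eq:power2}, only row $N$ of ${\bf F}_2$ and $[{\bf g}_2]_N$ have changed, and the three Frobenius-norm terms each strictly lose their row-$N$ contribution, which is proportional to $f_{2,N}^{2}$ and equals $f_{2,N}^{2}\bigl([{\bf g}_1]_{N-1}^{2}+\|({\bf I}+{\bf F}_1{\bf F}_2)_{N-1,:}\|^{2}\sigma_1^{2}+\|({\bf F}_1)_{N-1,:}\|^{2}\sigma_2^{2}\bigr)$. The delicate piece is $\|({\bf I}+{\bf F}_2{\bf F}_1){\bf g}_2\|^{2}$: its $N$-th entry shifts from $[{\bf g}_2]_N+f_{2,N}({\bf F}_1{\bf g}_2)_{N-1}$ to $[{\bf g}_2']_N$, producing a cross term $2[{\bf g}_2]_N f_{2,N}({\bf F}_1{\bf g}_2)_{N-1}$ whose sign is not controlled a priori. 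The main obstacle will be taming this cross term; my intended tactic is to write $P_2^{\mathrm{old}}-P_2^{\mathrm{new}}$ as a quadratic in $[{\bf g}_2]_N$ and show its discriminant is non-positive, leveraging the structural fact that the $(N{-}1,N{-}1)$ diagonal entry of ${\bf I}+{\bf F}_1{\bf F}_2$ is exactly $1$ (both factors are strictly lower triangular), which gives $\|({\bf I}+{\bf F}_1{\bf F}_2)_{N-1,:}\|^{2}\ge 1$ and a usable lower bound on the $f_{2,N}^{2}\sigma_1^{2}$-savings that must dominate the cross-term deficit; this is exactly where I expect the proof's real subtlety to lie.
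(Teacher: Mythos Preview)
Your overall strategy---take an arbitrary feasible quadruple, zero out $f_{2,N}$, adjust ${\bf g}_2$ to restore ${\rm SNR}_2=\eta_2$, and verify the objective does not increase---is exactly the paper's substitution argument. Your first observation (that $f_{2,N}$ is invisible to ${\bf F}_1{\bf F}_2$, ${\bf Q}_1$, ${\rm SNR}_1$, and $\mathbb{E}[\|{\bf x}_1\|^2]$) coincides verbatim with the paper's, and the structural facts you cite are the ones the paper uses.

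The one substantive difference is \emph{how} ${\bf g}_2$ is adjusted. You rescale only the $N$-th coordinate, exploiting the diagonal structure of ${\bf Q}_2$ to preserve the $N$-th SNR summand exactly. The paper instead applies a single global scalar, $\bar{\bf g}_2=(1-\epsilon){\bf g}_2$, choosing $\epsilon\in[0,1)$ so that $\bar{\bf g}_2^\top\bar{\bf Q}_2^{-1}\bar{\bf g}_2=\eta_2$ (which exists because $\bar{\bf Q}_2\preceq{\bf Q}_2$). With that choice the paper simply asserts the term-by-term inequality
\[
\|({\bf I}+{\bf F}_2{\bf F}_1){\bf g}_2\|^2+\cdots \;\ge\; \|({\bf I}+\bar{\bf F}_2{\bf F}_1)\bar{\bf g}_2\|^2+\cdots
\]
and does not single out or analyze the cross term $2[{\bf g}_2]_Nf_{2,N}({\bf F}_1{\bf g}_2)_{N-1}$ at all. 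So the discriminant machinery you are gearing up for is absent from the paper's proof; the paper treats the power comparison as immediate once the global rescaling is in place.

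A caution about your planned discriminant tactic: writing $P_2^{\rm old}-P_2^{\rm new}$ as a quadratic in $g=[{\bf g}_2]_N$ and trying to force a non-positive discriminant will not go through unconditionally, because the linear coefficient is $2f_{2,N}({\bf F}_1{\bf g}_2)_{N-1}$, which depends on the \emph{other} entries of ${\bf g}_2$. Those entries are constrained only through ${\rm SNR}_2=\eta_2$, and one can choose them (and ${\bf F}_1$) so that $\sigma_2^2({\bf F}_1{\bf g}_2)_{N-1}^2$ exceeds the $\sigma_1^2$-weighted savings you identified, making the discriminant positive. In other words, the inequality you want is not true for all $g$ simultaneously; it relies on the specific coupling between $g$ and the remaining coordinates imposed by the SNR constraint. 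If you stick with your coordinate-wise rescaling you will need to use that coupling explicitly, whereas the paper's global rescaling sidesteps the issue (at the cost of asserting the final inequality rather than isolating the delicate term).
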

\begin{proof}
We let (${\bf g}_1$, ${\bf F}_1$, ${\bf g}_2$, ${\bf F}_2$) be any feasible solution to $\bm{\mathcal{P}}$.
We also let $\bar {\bf F}_2$ be equal to ${\bf F}_2$, except that the last entry of $\bar {\bf F}_2$ is zero, i.e.,  $\bar f_{2,N}=0$. 
We will show that  (i) the solution (${\bf g}_1$, ${\bf F}_1$, $\bar {\bf g}_2$, $\bar {\bf F}_2$)  is a feasible solution where $\bar {\bf g}_2 = (1-\epsilon){\bf g}_2$ with some $\epsilon \in [0,1)$, and 
(ii) the solution (${\bf g}_1$, ${\bf F}_1$, $\bar {\bf g}_2$, $\bar {\bf F}_2$) results in an objective value smaller than or equal to that with (${\bf g}_1$, ${\bf F}_1$, ${\bf g}_2$, ${\bf F}_2$).


We will show the first statement (i).
Since (${\bf g}_1$, ${\bf F}_1$, ${\bf g}_2$, ${\bf F}_2$) is a feasible solution, it 
satisfies the constraints for ${\rm SNR}_1$ and ${\rm SNR}_2$ in \eqref{opt:con}.
First, for ${\rm SNR}_2$, using \eqref{eq:SNR_opt} and \eqref{eq:Q2}, we get
\begin{align}
    {\rm SNR}_2 = \eta_2 &= {\bf g}_2^\top ({\bf F}_2 {\bf F}_2^\top \sigma_1^2 + \sigma_2^2{\bf I} )^{-1} {\bf g}_2 
  \nonumber   \\&\le {\bf g}_2^\top (\bar {\bf F}_2 \bar {\bf F}_2^\top \sigma_1^2 + \sigma_2^2{\bf I} )^{-1} {\bf g}_2.
  \label{lemma2:eq:SNR2}
\end{align}
In \eqref{lemma2:eq:SNR2},
we can always choose $\bar {\bf g}_2 = (1-\epsilon){\bf g}_2$ with $\epsilon \in [0,1)$ that satisfies  $\bar {\bf g}_2^\top (\bar {\bf F}_2 \bar {\bf F}_2^\top \sigma_1^2 + \sigma_2^2{\bf I} )^{-1} \bar {\bf g}_2 = \eta_2$. 
This implies that (${\bf g}_1$, ${\bf F}_1$, $\bar {\bf g}_2$, $\bar {\bf F}_2$) satisfies the constraint for ${\rm SNR}_2$.
The constraint for ${\rm SNR}_1$ is also satisfied with (${\bf g}_1$, ${\bf F}_1$, $\bar {\bf g}_2$, $\bar {\bf F}_2$)
since ${\rm SNR}_1$ relies on ${\bf Q}_1 $ in \eqref{eq:Q1} and we have ${\bf F}_1 {\bf F}_2 = {\bf F}_1 \bar {\bf F}_2$.
Therefore, (${\bf g}_1$, ${\bf F}_1$, $\bar {\bf g}_2$, $\bar {\bf F}_2$) is a feasible solution to $\bm{\mathcal{P}}$.

We then will show the second statement (ii).
First, (${\bf g}_1$, ${\bf F}_1$, $\bar {\bf g}_2$, $\bar {\bf F}_2$) yields
a smaller or an equal transmit power of $\mathbb{E}\big[\|{\bf x}_2 \|^2\big]$  
since
\begin{align}
    \mathbb{E}\big[\|{\bf x}_2 \|^2\big] & = \| ({\bf I} + {\bf F}_2 {\bf F}_1 ) {\bf g}_2\|^2 + \|{\bf F}_2{\bf g}_1\|^2
    \nonumber
    \\ 
    & \hspace{3mm} + 
    \| {\bf F}_2( {\bf I} + {\bf F}_1{\bf F}_2 ) \|_F^2 \sigma_1^2
    + \| {\bf F}_2{\bf F}_1 \|_F^2 \sigma_2^2  
      \nonumber  \\
    & 
    \ge \| ({\bf I} + \bar {\bf F}_2 {\bf F}_1 ) \bar {\bf g}_2\|^2 + \|\bar {\bf F}_2{\bf g}_1\|^2
    \nonumber
    \\ 
    & \hspace{3mm} +
    \| \bar {\bf F}_2( {\bf I} + {\bf F}_1 \bar {\bf F}_2 ) \|_F^2 \sigma_1^2
    + \| \bar {\bf F}_2{\bf F}_1 \|_F^2 \sigma_2^2  .
\end{align}
Note that $\mathbb{E}\big[\|{\bf x}_1 \|^2\big]$ in \eqref{eq:power1} are not dependent on $f_{2,N}$ since ${\bf F}_1 {\bf F}_2$ does not include $f_{2,N}$.
Therefore, when $f_{2,N}=0$,
we can always obtain a smaller or an equal objective value of $\bm{\mathcal{P}}$,
while satisfying the constraints in  \eqref{opt:con}.
\end{proof}

We next look into the solution behavior of the message encoding vector for User 2, ${\bf g}_2$. 
To this end, we first formulate the optimization problem $\bm{\mathcal{P}}$ only with respect to ${\bf g}_2$, given by
\begin{align}
    & 
    \underset{{\bf g}_2}
    {\text{min}} & & 
 \alpha \|{\bf F}_1 {\bf g}_2 \|^2 
 + (1-\alpha) \| ({\bf I} + {\bf F}_2 {\bf F}_1 ) {\bf g}_2\|^2
 \nonumber
\\
& \text{s.t.}
& &  
{\bf g}_2^\top {\bf Q}_2^{-1} {\bf g}_2 = \eta_2.
\label{opt:g2}
\end{align}
Defining ${\bf q}_2 = {\bf Q}_2^{-1/2} {\bf g}_2$ where ${\bf Q}_2 = ({\bf Q}_2^{1/2})^2$, we write the equivalent optimization problem~as\footnote{If we conduct the singular value decomposition on ${\bf Q}_2$, we have ${\bf Q}_2 = {\bf U} {\Sigma} {\bf U}^\top $ and obtain
${\bf Q}_2^{1/2} = {\bf U} {\Sigma}^{1/2} {\bf U}^\top$.}
\begin{align}
    & 
    \underset{{\bf q}_2}
    {\text{min}} & & 
    {\bf q}_2^\top {\bf B} {\bf q}_2
 \nonumber
\\
& \text{s.t.}
& &  
\|{\bf q}_2\|^2 = \eta_2,
\label{opt:q2}
\end{align}
where
\begin{align}
    {\bf B} 
    & = \alpha {\bf Q}_2^{1/2}  {\bf F}_1^\top  {\bf F}_1 {\bf Q}_2^{1/2}
    \nonumber
    \\
    & \hspace{.6cm}+ (1-\alpha) {\bf Q}_2^{1/2}  ({\bf I} + {\bf F}_2 {\bf F}_1 )^\top ({\bf I} + {\bf F}_2 {\bf F}_1 )
     {\bf Q}_2^{1/2}.
\end{align}

We then introduce our conjecture on the objective function value of \eqref{opt:q2}, based on which we find the optimal solution for ${\bf g}_2$ in \eqref{opt:g2}.
\begin{conjecture}
    \label{conj1}
    For any ${\bf F}_1$ and ${\bf F}_2$ (in the form of \eqref{eq:F2}),
    \begin{align}
    \min \{\alpha \sigma_1^2, (1-\alpha) \sigma_2^2 \}
    \le \nu_{\min}[{\bf B}]  
    \le (1-\alpha) \sigma_2^2,
    \end{align}
where 
    $\nu_{\min}[{\bf B}] $ denotes the smallest eigenvalue of ${\bf B}$ in \eqref{opt:q2}. 
\end{conjecture}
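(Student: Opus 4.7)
The plan is to prove the two bounds separately using the Rayleigh quotient characterization $\nu_{\min}[{\bf B}] = \min_{\|{\bf q}\|=1} {\bf q}^\top {\bf B} {\bf q}$. Throughout I will exploit a structural observation: because Proposition~\ref{pro1} gives $f_{2,N}=0$ and ${\bf F}_2$ has only subdiagonal nonzeros, the matrix ${\bf F}_2 {\bf F}_2^\top$ is diagonal with zero entries in positions $(1,1)$ and $(N,N)$. Hence ${\bf Q}_2$ is diagonal with $d_1^2 = d_N^2 = \sigma_2^2$ and $d_i^2 = \sigma_1^2 f_{2,i}^2 + \sigma_2^2$ for $2 \le i \le N-1$, and ${\bf Q}_2^{1/2}$ is the corresponding diagonal square root.

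For the upper bound, I would simply test with ${\bf q} = {\bf e}_N$. Since ${\bf Q}_2^{1/2} {\bf e}_N = \sigma_2 {\bf e}_N$ and the last column of the strictly lower triangular ${\bf F}_1$ is zero, we get ${\bf F}_1 {\bf Q}_2^{1/2} {\bf e}_N = {\bf 0}$ and hence ${\bf F}_2 {\bf F}_1 {\bf Q}_2^{1/2} {\bf e}_N = {\bf 0}$. Therefore ${\bf e}_N^\top {\bf B} {\bf e}_N = \alpha \cdot 0 + (1-\alpha)\sigma_2^2 = (1-\alpha)\sigma_2^2$, which immediately yields $\nu_{\min}[{\bf B}] \le (1-\alpha)\sigma_2^2$.

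For the lower bound I would change variables ${\bf y} = {\bf Q}_2^{1/2} {\bf q}$, so that $\|{\bf q}\|^2 = \sum_i y_i^2/d_i^2$ and the Rayleigh quotient becomes $\alpha \|{\bf F}_1 {\bf y}\|^2 + (1-\alpha)\|({\bf I}+{\bf F}_2{\bf F}_1){\bf y}\|^2$. Writing $w_i = ({\bf F}_1 {\bf y})_i$ and using $w_1=0$ together with $f_{2,N}=0$, the second norm expands as $y_1^2 + y_2^2 + \sum_{i=3}^{N-1}(y_i + f_{2,i} w_{i-1})^2 + y_N^2$. The key step is to apply Young's inequality $(y_i + f_{2,i} w_{i-1})^2 \ge (1-\epsilon_i) y_i^2 + (1-1/\epsilon_i) f_{2,i}^2 w_{i-1}^2$ for each $i=3,\ldots,N-1$ with $f_{2,i}\ne 0$, using the calibrated choice $\epsilon_i = (1-\alpha) f_{2,i}^2 /[\alpha + (1-\alpha) f_{2,i}^2]$. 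With this choice the resulting $w_{i-1}^2$ coefficient is exactly $-\alpha$, which cancels the $+\alpha w_{i-1}^2$ term already present in $\alpha \|{\bf F}_1 {\bf y}\|^2$, leaving only nonnegative residual $w$-terms that I simply drop.

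After cancellation the lower bound collapses to the diagonal quadratic $(1-\alpha)(y_1^2+y_2^2+y_N^2) + \sum_{i=3}^{N-1} \frac{\alpha(1-\alpha)}{\alpha + (1-\alpha) f_{2,i}^2} y_i^2$, and what remains is to verify that each coefficient dominates $c/d_i^2$ with $c = \min\{\alpha\sigma_1^2,(1-\alpha)\sigma_2^2\}$. For $i\in\{1,2,N\}$ this is immediate since $d_i^2 \ge \sigma_2^2$ and $c \le (1-\alpha)\sigma_2^2$. For the interior indices, cross-multiplying reduces the required inequality to $(1-\alpha) x (\alpha\sigma_1^2 - c) + \alpha((1-\alpha)\sigma_2^2 - c) \ge 0$ with $x = f_{2,i}^2 \ge 0$, and both brackets are nonnegative by the very definition of $c$. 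The main obstacle is identifying this specific $\epsilon_i$: coarser tools such as Weyl's inequality applied to ${\bf B}$ as a sum of PSD matrices or the elementary bound $\|a+b\|^2 \ge (\|a\|-\|b\|)^2$ are far too lossy to surface the trade-off between $\alpha\sigma_1^2$ and $(1-\alpha)\sigma_2^2$, so the argument truly hinges on exploiting the sparsity pattern of ${\bf F}_2$ to decouple the quadratic form coordinate by coordinate.
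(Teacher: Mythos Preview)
Your argument is correct and in fact strictly stronger than what the paper establishes. The paper does \emph{not} prove the conjecture in general: it only verifies the special case $N=3$ by writing ${\bf B}=(1-\alpha)\sigma_2^2{\bf I}+{\bf C}$, computing the entries of ${\bf C}$ explicitly (using $f_{2,3}=0$ from Proposition~\ref{pro1}), and checking that all principal minors of the resulting $3\times 3$ matrix are nonnegative, which yields $\nu_{\min}[{\bf B}]=(1-\alpha)\sigma_2^2$ and hence both inequalities at once. Your route is entirely different: you treat the two bounds separately, obtain the upper bound with the test vector ${\bf e}_N$, and obtain the lower bound by the coordinate-wise Young inequality with the calibrated $\epsilon_i$ that exactly cancels the $\alpha w_{i-1}^2$ terms. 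This decoupling works for arbitrary $N$ and arbitrary strictly lower-triangular ${\bf F}_1$, so you have actually supplied a full proof of the conjecture rather than a verification of one small case. Note also that in the regime $\alpha\ge \sigma_2^2/(\sigma_1^2+\sigma_2^2)$ your two bounds coincide, so you recover $\nu_{\min}[{\bf B}]=(1-\alpha)\sigma_2^2$ exactly, which is precisely the input Proposition~\ref{pro2} needs.

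One caveat worth flagging: like the paper's $N=3$ computation, your argument invokes Proposition~\ref{pro1} to set $f_{2,N}=0$, whereas the conjecture as literally stated quantifies over all ${\bf F}_2$ of the form~\eqref{eq:F2}. Your lower-bound argument extends verbatim to $f_{2,N}\neq 0$ (just include $i=N$ in the Young step), but your upper bound via ${\bf e}_N$ then gives only $(1-\alpha)(\sigma_1^2 f_{2,N}^2+\sigma_2^2)$. Since the paper itself imposes $f_{2,N}=0$ before using the conjecture, this is a harmless discrepancy in context, but you may wish to state explicitly that you are proving the conjecture under that standing assumption.
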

\begin{proofN3}
We note that $\min \{\alpha \sigma_1^2, (1-\alpha) \sigma_2^2 \} \le (1-\alpha) \sigma_2^2$ for any $\alpha \in (0,1)$.
In the special case with $N=3$,
we will show that $\nu_{\min}[{\bf B}]  = (1-\alpha) \sigma_2^2$ for any ${\bf F}_1$ and ${\bf F}_2$ (in the form of \eqref{eq:F2}).
%
 We first rewrite ${\bf B}  = (1-\alpha) \sigma_2^2 {\bf I} + {\bf C}$ where ${\bf C} = 
 (1-\alpha) \sigma_1^2 {\bf F}_2 {\bf F}_2^\top 
 + {\bf Q}_2^{1/2} \big( 
    \alpha {\bf F}_1^\top {\bf F}_1 
    + (1-\alpha) 
    ( {\bf F}_2 {\bf F}_1 + {\bf F}_1^\top {\bf F}_2^\top + {\bf F}_1^\top {\bf F}_2^\top {\bf F}_2 {\bf F}_1
    )
    \big)
 {\bf Q}_2^{1/2}$. 
 Then,
 showing $\nu_{\min}[{\bf B}] = (1-\alpha) \sigma_2^2$ is equivalent to showing $\nu_{\min}[{\bf C}] = 0$.
 Using
(i) ${\bf Q}_2^{1/2} = \textrm{diag}( [\sigma_2, \sqrt{\sigma_2^2 + f_{2,2} \sigma_1^2}, 0] ) $ from \eqref{eq:SNR2} where $f_{2,3} =0$ from Proposition \ref{pro1}, (ii) $ [{\bf F}_1]_{i,j} = f_{1,i,j}$ for $1 \le j < i \in \{2, 3\}$ while other entries are zeros, and (iii) $ {\bf F}_2 {\bf F}_1 = {\bf 0}$ due to $f_{2,3}=0$, we have
\begin{equation}
{\bf C} = \begin{bmatrix}
c_{11} &  c_{12}  & 0\\
c_{21} & c_{22} & 0 \\
0  & 0 & 0
\end{bmatrix},
\nonumber
\end{equation}
where 
\begin{align}
    c_{11} & = \alpha \sigma_2^2 (f_{1,2,1}^2 + f_{1,3,1}^2),
    \nonumber
    \\
    c_{12} & = \alpha \sigma_2 \sqrt{\sigma_2^2 + f_{2,2}^2 \sigma_1^2}f_{1,3,1} f_{1,3,2},
    \nonumber
    \\
    c_{21} & 
    = \alpha \sigma_2 \sqrt{\sigma_2^2 + f_{2,2}^2 \sigma_1^2}f_{1,3,1} f_{1,3,2},
    \nonumber
    \\
    c_{22} & = (1-\alpha)\sigma_1^2 f_{2,2}^2 + \alpha(\sigma_2^2 + f_{2,2}^2 \sigma_1^2 ) f_{1,3,2}^2.
    \nonumber
\end{align}
We can easily show that, for any ${\bf F}_1$ and ${\bf F}_2$, all the principal minors of ${\bf C}$ (i.e.,  the determinants of the principal matrices) are non-negative, which proves that ${\bf C}$ is positive semi-definite.
We then have $\nu_{\min}[{\bf C}] = 0$, which leads to $\nu_{\min}[{\bf B}] = (1-\alpha)\sigma_2^2$.
\end{proofN3}

We note that, for any $N$, any example that violates
the above conjecture  
has not been observed from the extensive numerical simulations where ${\bf F}_1$ and ${\bf F}_2$ are randomly generated.

\begin{proposition}
\label{pro2}
If Conjecture 1 is true, ${\bf g}_2 = [0, ...,  0, \sqrt{\eta_2}\sigma_2]^\top$ is optimal in  $\bm{\mathcal{P}}$
 when $\alpha \ge  \frac{\sigma_2^2}{\sigma_1^2 + \sigma_2^2}$.
\end{proposition}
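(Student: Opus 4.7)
The plan is to carry out the optimization over ${\bf g}_2$ alone while holding ${\bf g}_1$, ${\bf F}_1$, and ${\bf F}_2$ fixed. Because ${\rm SNR}_1$ depends only on ${\bf F}_1$ and ${\bf F}_2$ (through ${\bf Q}_1$) and not on ${\bf g}_2$, the induced subproblem is exactly \eqref{opt:g2}, which, via the substitution ${\bf q}_2 = {\bf Q}_2^{-1/2}{\bf g}_2$, is equivalent to the sphere-constrained quadratic form \eqref{opt:q2}. The latter is a standard Rayleigh-quotient problem: its optimal value is $\eta_2\,\nu_{\min}[{\bf B}]$, and it is attained when ${\bf q}_2^\star$ is an eigenvector of ${\bf B}$ associated with $\nu_{\min}[{\bf B}]$, scaled so that $\|{\bf q}_2^\star\|^2=\eta_2$.

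The key step is then to exhibit ${\bf e}_N$ (the last standard basis vector) as an eigenvector of ${\bf B}$ with eigenvalue $(1-\alpha)\sigma_2^2$. This uses two structural facts together. First, since ${\bf F}_1$ is strictly lower triangular, ${\bf F}_1 {\bf e}_N = {\bf 0}$, which wipes out the $\alpha$-term of ${\bf B}$ when applied to ${\bf e}_N$. Second, by Proposition \ref{pro1} we may take $f_{2,N}=0$, so the $N$-th row of ${\bf F}_2$ vanishes and ${\bf F}_2^\top {\bf e}_N = {\bf 0}$; combined with ${\bf F}_1 {\bf e}_N = {\bf 0}$, this yields $({\bf I}+{\bf F}_2{\bf F}_1){\bf e}_N = {\bf e}_N$ and $({\bf I}+{\bf F}_2{\bf F}_1)^\top {\bf e}_N = {\bf e}_N$. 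Moreover, ${\bf F}_2{\bf F}_2^\top$ is diagonal with $[{\bf F}_2{\bf F}_2^\top]_{N,N}=0$, so ${\bf Q}_2$ is diagonal with $[{\bf Q}_2]_{N,N}=\sigma_2^2$ and ${\bf Q}_2^{1/2}{\bf e}_N = \sigma_2\,{\bf e}_N$. Assembling these yields ${\bf B}{\bf e}_N = (1-\alpha)\sigma_2^2\,{\bf e}_N$.

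It then remains to argue that $(1-\alpha)\sigma_2^2$ is actually the minimum eigenvalue. The hypothesis $\alpha \ge \sigma_2^2/(\sigma_1^2+\sigma_2^2)$ rearranges to $\alpha\sigma_1^2 \ge (1-\alpha)\sigma_2^2$, so $\min\{\alpha\sigma_1^2,(1-\alpha)\sigma_2^2\} = (1-\alpha)\sigma_2^2$; Conjecture \ref{conj1} then sandwiches $\nu_{\min}[{\bf B}] = (1-\alpha)\sigma_2^2$. Hence ${\bf q}_2^\star = \sqrt{\eta_2}\,{\bf e}_N$ is optimal for \eqref{opt:q2}, and pulling back gives ${\bf g}_2^\star = {\bf Q}_2^{1/2}{\bf q}_2^\star = \sqrt{\eta_2}\,\sigma_2\,{\bf e}_N = [0,\ldots,0,\sqrt{\eta_2}\sigma_2]^\top$, matching the claim. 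Feasibility in $\bm{\mathcal{P}}$ can be checked directly: ${\bf g}_2^{\star\top}{\bf Q}_2^{-1}{\bf g}_2^\star = \eta_2\sigma_2^2 \cdot (1/\sigma_2^2) = \eta_2$.

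The main obstacle will be the eigenvector computation in the second paragraph. Although each ingredient is a short linear-algebra fact, the argument only closes when both $f_{2,N}=0$ and the strict lower triangularity of ${\bf F}_1$ are exploited \emph{simultaneously}; absent Proposition \ref{pro1}, the $N$-th row of ${\bf F}_2{\bf F}_1$ would not vanish, $({\bf I}+{\bf F}_2{\bf F}_1)^\top{\bf e}_N$ would deviate from ${\bf e}_N$, and ${\bf e}_N$ would cease to be an eigenvector of ${\bf B}$. The condition on $\alpha$ is then exactly what is needed to force the two sides of Conjecture \ref{conj1} to coincide, pinning $\nu_{\min}[{\bf B}]$ to the value produced by ${\bf e}_N$.
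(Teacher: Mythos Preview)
Your proposal is correct and follows essentially the same approach as the paper: reduce to the sphere-constrained quadratic \eqref{opt:q2}, use Conjecture~\ref{conj1} together with the condition $\alpha \ge \sigma_2^2/(\sigma_1^2+\sigma_2^2)$ to pin $\nu_{\min}[{\bf B}]=(1-\alpha)\sigma_2^2$, show ${\bf e}_N$ attains this eigenvalue (the paper phrases this as ``the last row and column of ${\bf B}$ vanish except for the $(N,N)$ entry''), and pull back via ${\bf Q}_2^{1/2}$. Your eigenvector verification is a more explicit unpacking of exactly the structural facts the paper invokes, namely $f_{2,N}=0$ from Proposition~\ref{pro1} and the strict lower triangularity of ${\bf F}_1$.
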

\begin{proof}
We have a lower bound of the objective function in \eqref{opt:q2} as
${\bf q}_2^\top {\bf B} {\bf q}_2 
    \ge \nu_{\min}[{\bf B}] \| {\bf q}_2 \|^2.$
From Conjecture~\ref{conj1}, we have $\nu_{\min}[{\bf B}]
= (1-\alpha) \sigma_2^2$
when $\alpha \ge  \frac{\sigma_2^2}{\sigma_1^2 + \sigma_2^2}$.
Then, we have the lower bound as ${\bf q}_2^\top {\bf B} {\bf q}_2 
    \ge (1-\alpha) \sigma_2^2 \| {\bf q}_2 \|^2.$
Here, ${\bf q}_2^\star  = [0, ...,0, \sqrt{\eta_2}]^\top$ satisfies the lower bound 
with $\|{\bf q}^\star_2\|^2 = \eta_2$, 
which can be easily shown by the fact that
all the entries in the last column and row of ${\bf B}$ are  zeros except the last diagonal entry is $(1-\alpha) \sigma_2^2$ due to $f_{2,N}=0$ from Proposition~\ref{pro1}.
In other words, ${\bf q}_2^\star$ is an optimal solution of 
\eqref{opt:q2}.
%
We then have the optimal solution for \eqref{opt:g2} as
${\bf g}_2^\star = {\bf Q}_2^{1/2} {\bf q}_2^\star = [0, ...,0,  \sqrt{\eta_2} \sigma_2]^\top $,
since 
$f_{2,N}=0$ from Proposition 1.
\vspace{-2mm}
\end{proof}
The result of Proposition \ref{pro2}   shows
that it is optimal for User 2 to  transmit the message only over the last channel use when the weight coefficient in \eqref{opt:con} satisfies $\alpha \ge  \frac{\sigma_2^2}{\sigma_1^2 + \sigma_2^2}$.

Using Propositions~\ref{pro1} and ~\ref{pro2}, we next aim to simplify our optimization problem $\bm{\mathcal{P}}$.
In our optimization, we consider the case with $\alpha \ge  \frac{\sigma_2^2}{\sigma_1^2 + \sigma_2^2}$.
From Proposition~\ref{pro2}, we have ${\bf g}_2 = [0, ...,  0, \sqrt{\eta_2}\sigma_2]^\top$ as an optimal solution, which always satisfies ${\rm SNR}_2  = \eta_2$ regardless of other variables.
Thus, we can remove the dependency of the constraint for ${\rm SNR}_2$ in $\bm{\mathcal{P}}$.
Further, to make $\bm{\mathcal{P}}$ more tractable, we define ${\bf q}_1 = {\bf Q}_1^{-1/2} {\bf g}_1$ where ${\bf Q}_1 = ({\bf Q}_1^{1/2})^2$,
which implies 
    ${\rm SNR}_1 = \| {\bf q}_1 \|^2
$ and 
$\| {\bf g}_1 \|^2 = {\bf q}^\top_1 {\bf Q}_1 {\bf q}_1$.
Consequently, we rewrite the transmit powers in \eqref{eq:power1} and \eqref{eq:power2} as
\begin{align}
    \mathbb{E}[\|{\bf x}_1\|^2] 
    &= 
    {\bf q}^\top_1 {\bf Q}_1 {\bf q}_1 +
     \|{\bf F}_1 {\bf F}_2 \|_F^2 \sigma_1^2
    + \| {\bf F}_1 \|_F^2 \sigma_2^2
   \nonumber  \\
    &= 
    \|{\bf q}^\top_1 ({\bf I} + {\bf F}_1{\bf F}_2) \|^2 \sigma_1^2
    + 
    \| {\bf q}^\top_1 {\bf F}_1 \|^2 \sigma_2^2
    \nonumber
    \\
    & \hspace{3mm} + \|{\bf F}_1 {\bf F}_2 \|_F^2 \sigma_1^2
    + \| {\bf F}_1 \|_F^2 \sigma_2^2,
    \label{eq:power1:q1}
    \\
    \mathbb{E}[\|{\bf x}_2\|^2] 
    & =  
    \| {\bf g}_2 \|^2
    + \|{\bf F}_2 {\bf Q}_1^{1/2} {\bf q}_1\|^2
    + 
    \| {\bf F}_2({\bf I} + {\bf F}_1{\bf F}_2 ) \|_F^2 \sigma_1^2
    \nonumber
    \\
    & \hspace{3mm} + \| {\bf F}_2 {\bf F}_1
    \|_F^2 \sigma_2^2.
    \label{eq:power2:q1}
\end{align}
Finally, we simplify our optimization $\bm{\mathcal{P}}$ as
\begin{align}
   (\bm{\mathcal{\widetilde{P}}}): \hspace{0.2cm} & \underset{{\bf q}_1, {\bf F}_1, {\bf F}_2}
    {\text{min}} & & 
 \alpha \mathbb{E}\big[\|{\bf x}_1 \|^2\big] + (1-\alpha) \mathbb{E}\big[\|{\bf x}_2 \|^2\big]
\nonumber
\\
& ~\text{s.t.}
& &  
\| {\bf q}_1\|^2 = \eta_1.
\label{opt:con:3vars}
\end{align}

\section{Iterative Two-Way Optimization for Linear Encoding Schemes in GTWC}

To solve the optimization problem $\bm{\mathcal{\widetilde{P}}}$,
we divide it into two sub-problems, and solve them alternately through a series of iterations.
The first sub-problem is to solve $\bm{\mathcal{\widetilde{P}}}$ for ${\bf q}_1$ and ${\bf F}_1$ given that ${\bf F}_2$ is fixed, and the second sub-problem is to solve  for ${\bf F}_2$ assuming ${\bf q}_1$ and ${\bf F}_1$ are fixed.

\subsection{First sub-problem for obtaining ${\bf q}_1$ and ${\bf F}_1$}

We assume a fixed value for ${\bf F}_2$. 
%
We first show that $\mathbb{E}[\|{\bf x}_2\|^2] $ is upper bounded by sum of the scaled version of $\mathbb{E}[\|{\bf x}_1\|^2] $ and some constant terms as follows:
\begin{align}
    & \mathbb{E}[\|{\bf x}_2\|^2] 
    \overset{(i)}{=}  \| {\bf g}_2 \|^2 + \|{\bf F}_2 {\bf Q}_1^{1/2} {\bf q}_1\|^2
    + 
    \| {\bf F}_2 \|_F^2 \sigma_1^2 
    \nonumber
    \\ 
    & \hspace{1.75cm} + \| {\bf F}_2 {\bf F}_1 {\bf F}_2 \|_F^2 \sigma_1^2
    + \| {\bf F}_2 {\bf F}_1
    \|_F^2 \sigma_2^2
    \nonumber  \\
    & \hspace{.5cm}~~~~~~~  \le \| {\bf g}_2 \|^2 
    + \| {\bf F}_2 \|_F^2 \sigma_1^2 
    \nonumber
    \\
    & \hspace{1cm}
   ~~~~~~ + f_{2,\max}^2 \big( \|{\bf Q}_1^{1/2} {\bf q}_1\|^2 + \| {\bf F}_1 {\bf F}_2\|_F^2 + \| {\bf F}_1 \|_F^2 \big)
   \hspace{-2mm}  \nonumber \\
    & \hspace{.5cm} ~~~~~~~ = \| {\bf g}_2 \|^2 
    + 
    \| {\bf F}_2 \|_F^2 \sigma_1^2 
    + f_{2,\max}^2 \mathbb{E}[\|{\bf x}_1\|^2],
    \label{eq:power2_org}
\end{align}
where $f_{2,{\rm max}}^2 = \underset{i=2,...,N-1}{\max} f_{2,i}^2$. We use the fact that $\textrm{tr}({\bf F}_2 {\bf F}_1 {\bf F}_2 {\bf F}_2^\top) = 0$ to obtain the equality $(i)$ in \eqref{eq:power2_org}.
Accordingly, we upper bound the objective function of $\bm{\mathcal{\widetilde{P}}}$~as
\begin{equation}
    { (1-\alpha) (\| {\bf g}_2 \|^2 
    + \| {\bf F}_2 \|_F^2 \sigma_1^2 )} + 
    (\alpha + f_{2,\max}^2 (1-\alpha)) \mathbb{E}\big[\|{\bf x}_1 \|^2\big],
    \label{eq:Ex2_ineq}
\end{equation}

In the first sub-problem, instead of solving  $\bm{\mathcal{\widetilde{P}}}$ directly, 
we aim to minimize the upper bound of the objective function of $\bm{\mathcal{\widetilde{P}}}$ in \eqref{eq:Ex2_ineq}.
Since the other terms in \eqref{eq:Ex2_ineq} are constants except for $\mathbb{E}\big[\|{\bf x}_1 \|^2\big]$, the first sub-problem is reduced to
\begin{align}
   (\bm{\mathcal{\widetilde{P}}_1}): \hspace{0.2cm} & \underset{{\bf q}_1, {\bf F}_1}
    {\text{min}} & & 
 \mathbb{E}\big[\|{\bf x}_1 \|^2\big]
\nonumber
\\
& \text{s.t.}
& &  
\| {\bf q}_1\|^2 = \eta_1.
\label{opt:con:q1F1}
\end{align}



We will solve $\bm{\mathcal{\widetilde{P}}_1}$
via (i) first obtaining the optimal solution form of ${\bf F}_1$ in terms of ${\bf q}_1$, and then (ii) plugging the optimal solution form of ${\bf F}_1$ in $\mathbb{E}[\|{\bf x}_1\|^2]$ and solving for ${\bf q}_1$.

\textbf{Solving for ${\bf F}_1$.} Note that ${\bf F}_1 \in \mathbb{R}^{N \times N}$ is a strictly lower triangular matrix given by
\begin{equation}
\begingroup 
\setlength\arraycolsep{2.5pt}
{\bf F}_1 = \begin{bmatrix}\setlength\arraycolsep{2pt}
0 &  0  & ... & 0\\
f_{1,2,1} & 0 & ... & 0 \\
\vdots & \ddots & \ddots & \vdots \\
f_{1,N,1}  & ... & f_{1,N,N-1} & 0
\end{bmatrix} = \begin{bmatrix}
0 & 0  & ... & 0\\
{\bf f}_{1,1} & 0 & ... & 0 \\
 & \ddots & \ddots  & \vdots \\
 &  &  {\bf f}_{1,N-1} & 0
\end{bmatrix},
\endgroup
\nonumber
\end{equation}
where ${\bf f}_{1,i} = [f_{1,i+1,i}, f_{1,i+2,i}, ..., f_{1,N,i}]^\top \in \mathbb{R}^{(N-i) \times 1}$, $i\in\{1,...,N-1\}$.
Considering ${\bf q}_{1} = [q_{1,1}, q_{1,2}, ..., q_{1,N}]^\top$, we define the vector that contains a portion of the entries of ${\bf q}_{1}$ as
\begin{equation}\label{eq:h_i}
    {\bf h}_{i} = [q_{1,i+1}, q_{1,i+2}, ..., q_{1,N}]^\top 
    \in \mathbb{R}^{(N-i) \times 1},
\end{equation}
where $i\in \{ 0,...,N-1\}$.
%
With the defined vectors $\{{\bf f}_{1,i}\}$ and $\{{\bf h}_{i}\}$, we can rewrite $\mathbb{E}[\|{\bf x}_1\|^2] $ in \eqref{eq:power1:q1} as
\begin{align}
    \mathbb{E}[\|{\bf x}_1\|^2] 
    =\sum_{i=1}^{N-1} \Phi_i({\bf f}_{1,i}) +  \sigma_1^2 \big( q_{1,N-1}^2 + q_{1,N}^2 \big),
    \label{eq:power1_hf}
\end{align}
where $\Phi_1({\bf f}_{1,1})\triangleq \vert {\bf h}_1^\top {\bf f}_{1,1} \vert^2 \sigma_2^2 
     +   {\bf f}_{1,1}^\top {\bf f}_{1,1} \sigma_2^2$ and 
     $\Phi_i({\bf f}_{1,i})\triangleq \big\vert q_{1,i-1} + f_{2,i} {\bf h}_{i}^\top {\bf f}_{1,i} \big\vert^2 \sigma_1^2 
    + \vert {\bf h}_{i}^\top {\bf f}_{1,i} \vert^2 \sigma_2^2  +{\bf f}_{1,i}^\top {\bf f}_{1,i}  (f_{2,i}^2 \sigma_1^2 + \sigma_2^2)$, 
    $i\in \{2,\cdots,N-1\}$.
    
Using \eqref{eq:power1_hf}, our problem of interest (i.e., $\underset{{\bf F}_1}
    {\text{min}} ~ 
 \mathbb{E}\big[\|{\bf x}_1 \|^2\big])$ can be
decomposed into $N-1$ independent problems each in the form of $\underset{{\bf f}_{1,i}}
    {\text{min}} ~ 
 \Phi_i({\bf f}_{1,i})$,
 $i\in\{1,...,N-1\}$. 
Since each independent problem is convex with respect to ${\bf f}_{1,i}$,
we find ${\bf f}_{1,i}$ optimally
by solving $\frac{\partial \Phi_i({\bf f}_{1,i})}{\partial {\bf f}_{1,i}} = {\bf 0}^\top$.
%
Obviously, we have ${\bf f}_{1,1} = {\bf 0}$. Also, for $i\in\{2,...,N-1\}$, we need to solve
\begin{align}
    & \frac{\partial \Phi_i({\bf f}_{1,i})}{\partial {\bf f}_{1,i}}  = (q_{1,i-1} + f_{2,i}{\bf h}_{i}^\top {\bf f}_{1,i})^\top {\bf h}_{i}^\top f_{2,i}\sigma_1^2
    \nonumber
    \\
    & \hspace{1cm} + ({\bf h}_{i}^\top {\bf f}_{1,i})^\top {\bf h}_{i}^\top \sigma_2^2
    +  (  f_{2,i}^2 \sigma_1^2 + \sigma_2^2) {\bf f}_{1,i}^\top = {\bf 0}^\top.
\end{align}
In order to satisfy the above equality, we need to have
\begin{align}
    (f_{2,i}^2 \sigma_1^2 + \sigma_2^2)({\bf h}_{i} {\bf h}_{i}^\top + {\bf I}){\bf f}_{1,i} = - q_{1,i-1} f_{2,i} \sigma_1^2 {\bf h}_{i}.
\end{align}
Finally, the optimal solution form of ${\bf f}_{1,i}$, $i\in\{2,...,N-1\}$, is given in terms of the entries of ${\bf q}_1$ (encapsulated in ${\bf h}_{i}$ according to \eqref{eq:h_i}) by
\begin{align}
    {\bf f}_{1,i} &= - \frac{q_{1,i-1} f_{2,i} \sigma_1^2}{f_{2,i}^2\sigma_1^2 + \sigma_2^2}
    \big( {\bf h}_{i} {\bf h}_{i}^\top + {\bf I} \big)^{-1} {\bf h}_{i}
   \nonumber  \\
    &\overset{(i)}{=} - \frac{q_{1,i-1} f_{2,i} \sigma_1^2}{f_{2,i}^2 \sigma_1^2 + \sigma_2^2}
    \bigg({\bf I} - \frac{{\bf h}_{i}{\bf h}_{i}^\top}
    {1+ \| {\bf h}_{i}\|^2 } \bigg) {\bf h}_{i}
    \nonumber
\\
    &= - \frac{ f_{2,i}\sigma_1^2}{f_{2,i}^2 \sigma_1^2 + \sigma_2^2}
    \frac{q_{1,i-1}}
    {1+ \| {\bf h}_{i}\|^2 }  {\bf h}_{i},
    \label{eq:f1i_opt}
\end{align}
where the Sherman–Morrison formula is used to obtain equality (i) in \eqref{eq:f1i_opt}.

\textbf{Solving for ${\bf q}_1$.}
Putting the optimal solution of $\{{\bf f}_{1,i}\}_{i=1}^{N-1}$ obtained in \eqref{eq:f1i_opt} back into  \eqref{eq:power1_hf},    
we get
\begin{align}
    & \mathbb{E}[\|{\bf x}_1\|^2] 
    = 
    \sum_{i=2}^{N-1} \bigg[ 
    \bigg( q_{1,i-1} 
     - \frac{ f_{2,i}^2 \sigma_1^2}{f_{2,i}^2\sigma_1^2 + \sigma_2^2}
    \frac{q_{1,i-1} \|{\bf h}_{i}\|^2}
    {1+ \| {\bf h}_{i}\|^2 }  
    \bigg)^2 \sigma_1^2 
    \nonumber
    \\
    & \hspace{.5cm} + \bigg( \frac{ f_{2,i}\sigma_1^2}{f_{2,i}^2\sigma_1^2 + \sigma_2^2}
    \frac{q_{1,i-1} \|{\bf h}_{i}\|^2}
    {1+ \| {\bf h}_{i}\|^2 }  \bigg)^2 \sigma_2^2 
    \nonumber
    \\
    & \hspace{.5cm}
     + \bigg( \frac{ f_{2,i}\sigma_1^2}{f_{2,i}^2 \sigma_1^2 + \sigma_2^2}
    \frac{q_{1,i-1} }
    {1+ \| {\bf h}_{i}\|^2 }  \bigg)^2 \|{\bf h}_{i}\|^2 (f_{2,i}^2\sigma_1^2 + \sigma_2^2) \bigg] \nonumber
      \nonumber  \\
    & \hspace{.5cm}+ \big( q_{1,N-1}^2 + q_{1,N}^2  \big) \sigma_1^2
       \nonumber \\
    & 
    \hspace{1.2cm} = \sum_{i=2}^{N-1}  \frac{\sigma_1^2 q_{1,i-1}^2 \big( f_{2,i}^2\sigma_1^2 + \sigma_2^2(1+ \|{\bf h}_{i}\|^2) \big)}
    {(f_{2,i}^2 \sigma_1^2 + \sigma_2^2)(1+ \|{\bf h}_{i}\|^2)} 
    \nonumber
    \\
    &\hspace{2.2cm} + \big( q_{1,N-1}^2 + q_{1,N}^2  \big) \sigma_1^2.
\end{align}
Then, $\bm{\mathcal{\widetilde{P}}_1}$ is reduced to the following optimization problem:
\begin{align}
    & \underset{ {\bf q}_1 }{\text{minimize}} & & 
    \sum_{i=1}^{N-2}  \frac{\sigma_1^2 q_{1,i}^2 \big(  f_{2,i+1}^2\sigma_1^2 + \sigma_2^2(1+ \|{\bf h}_{i+1}\|^2) \big)}
    {(f_{2,i+1}^2\sigma_1^2 + \sigma_2^2)(1+ \|{\bf h}_{i+1}\|^2)} 
 \nonumber
    \\
    & & & \hspace{1cm} + \big( q_{1,N-1}^2 + q_{1,N}^2  \big) \sigma_1^2
    \nonumber
\\
& \text{s.t.}
& &  
\|{\bf q}_1\|^2  = \eta_1.
\label{eq:con:q1}
\end{align}
%
Defining $x_i = q_{1,i}^2 \ge 0$, we rewrite the objective function in \eqref{eq:con:q1} as
\begin{align}
    & \sum_{i=1}^{N-2}  
    \frac{f_{2,i+1}^2 \sigma_1^4 x_i }{{(f_{2,i+1}^2\sigma_1^2 + \sigma_2^2)(1+ x_{i+2} + ... + x_{N})}} 
    \nonumber
    \\ 
    & \hspace{1cm} + \sum_{i=1}^{N-2} \frac{\sigma_1^2 \sigma_2^2 x_i}{f_{2,i+1}^2\sigma_1^2 + \sigma_2^2}
    + \sigma_1^2 (x_{N-1} + x_{N}),
    \nonumber
\end{align}
and the constraint in \eqref{eq:con:q1} as
    $\sum_{i=1}^{N} x_i  = \eta_1$.

Using the vector form of 
${\bf x} = [x_1, ..., x_N]^\top \in \mathbb{R}^{N \times 1}$, we can formulate the equivalent optimization problem as
\begin{align}
    & \underset{ {\bf x} }{\text{minimize}} & &  \sum_{i=1}^{N-1} \frac{{\bf u}_i^\top {\bf x} } { 1+ {\bf m}_i^\top {\bf x}}
    \nonumber
    \\
& \text{subject to}
& &  
{\bf 1}^\top {\bf x}  = \eta_1, \quad {\bf x} \ge {\bf 0},
\label{opt:con:x}
\end{align}
where ${\bf 1} = [1, ..., 1]^\top \in \mathbb{R}^{N \times 1}$ and ${\bf 0} = [0, ..., 0]^\top \in \mathbb{R}^{N \times 1}$. In~\eqref{opt:con:x}, ${\bf u}_i \in \mathbb{R}^{N \times 1}$ and ${\bf m}_i \in \mathbb{R}^{N \times 1}$, $i \in \{1,..., N-1\}$, are defined as
\begin{align}
    & {\bf u}_i  = \bigg[0, ..., 0, \underbrace{\frac{\vert f_{2,i+1} \vert^2 \sigma_1^4}{\vert f_{2,i+1} \vert^2\sigma_1^2 + \sigma_2^2}}_{ 
    i{\rm-th}}, 0, ..., 0 \bigg]^\top, i \in \{1,..., N-2\},
    \nonumber
    \\
    & {\bf u}_{N-1} = \bigg[ 
    \frac{\sigma_1^2 \sigma_2^2}{\vert f_{2,2} \vert^2\sigma_1^2 + \sigma_2^2}, ..., \frac{\sigma_1^2 \sigma_2^2}{\vert f_{2,N-1} \vert^2\sigma_1^2 + \sigma_2^2}, \sigma_1^2, \sigma_1^2
    \bigg]^\top,
    \nonumber
    \\
    & {\bf m}_i = [0, ..., 0, \underbrace{1}_{i{\rm-th}}, ..., 1]^\top, \quad i \in \{1,..., N-2\},
    \nonumber
    \\
    & {\bf m}_{N-1} = [0, ..., 0]^\top,
    \nonumber
\end{align}
where ${\bf u}_i, {\bf m}_i \ge {\bf 0}$.
The equivalent optimization problem in \eqref{opt:con:x} is a multi-objective linear fractional programming~\cite{freund2001solving}.
We thus can adopt commercial software~\cite{MatlabOTB} to solve this~problem.

\subsection{Second sub-problem for obtaining ${\bf F}_2$}

While fixing ${\bf q}_1$ and ${\bf F}_1$,
we formulate the second sub-problem as
\begin{align}
    & 
    (\bm{\mathcal{\widetilde{P}}_2}): \hspace{0.2cm}  \underset{{\bf F}_2}
    {\text{min}} & & 
 \alpha \mathbb{E}\big[\|{\bf x}_1 \|^2\big] + (1-\alpha) \mathbb{E}\big[\|{\bf x}_2 \|^2\big].
 \label{opt:obj:F2}
\end{align}
We aim to
minimize the objective of $\bm{\mathcal{\widetilde{P}}_2}$ for each $f_{2,i}$, $i\in\{2,...,N-1\}$
by setting the derivative with respect to $f_{2,i}$ equal to zero.
Our methodology would yield a  sub-optimal solution given the non-triviality of the problem $\bm{\mathcal{\widetilde{P}}_2}$.



Considering the expression for $\mathbb{E}[\|{\bf x}_2\|^2]$ in \eqref{eq:power2:q1},
we express each of the terms including ${\bf F}_2$ as a sum of entries of ${\bf F}_2$, i.e., $\{f_{2,i}\}_{i =2 }^{N-1}$. First, revisiting the second term in \eqref{eq:power2:q1}, we obtain
\begin{align}
     \|{\bf F}_2 {\bf Q}_1^{1/2} {\bf q}_1\|^2 & = {\bf q}_1^\top {\bf Q}_1^{1/2} {\bf F}_2^\top {\bf F}_2 {\bf Q}_1^{1/2} {\bf q}_1
     \nonumber
     \\
     & = {\bf p}^\top {\bf F}_2^\top {\bf F}_2 {\bf p}
      = \sum_{i=2}^N p_{i-1}^2 f_{2,i}^2,
     \label{eq:derv_f2_term1}
\end{align}
where we assumed that ${\bf p} \triangleq {\bf Q}_1^{1/2} {\bf q}_1 = [p_1, ..., p_N]^\top$ is fixed for tractability although ${\bf Q}_1^{1/2}$ depends on ${\bf F}_2$.
We then express the third term in \eqref{eq:power2:q1} as
\begin{align}
     & \|{\bf F}_2({\bf I} + {\bf F}_1{\bf F}_2 ) \|_F^2 \sigma_1^2  
     \nonumber
     \\
     & \hspace{1cm} = \sigma_1^2 \sum_{i=2}^N f_{2,i}^2 + \sigma_1^2 \sum_{i=2}^{N-2} \sum_{j=i+1}^{N-1} f_{1,j,i}^2 f_{2,i}^2 f_{2,j+1}^2.
     \label{eq:derv_f2_term2}
\end{align}
Also, the last term in \eqref{eq:power2:q1} can be expressed as
\begin{align}
    \| {\bf F}_2 {\bf F}_1 \|_F^2 \sigma_2^2 
    & = \sigma_2^2 \sum_{i=3}^{N}  f_{2,i}^2 \sum_{j=1}^{i-2} f_{1,i-1,j}^2.
    \label{eq:derv_f2_term3}
\end{align}

Since the derivatives, $\partial \mathbb{E}\big[\|{\bf x}_1 \|^2\big]/\partial f_{2,i} $ and $\partial \mathbb{E}\big[\|{\bf x}_2 \|^2\big]/\partial f_{2,i} $, can be readily derived from
\eqref{eq:power1_hf} and \eqref{eq:power2:q1} using
\eqref{eq:derv_f2_term3},  respectively,
we finally have
\begin{align}
    & \alpha \frac{\partial \mathbb{E}\big[\|{\bf x}_1 \|^2\big]}{\partial f_{2,i}}  + (1-\alpha) \frac{\partial \mathbb{E}\big[\|{\bf x}_2 \|^2\big]}{\partial f_{2,i}} 
    \nonumber
    \\
    & \hspace{3cm} = 
     2 \alpha \sigma_1^2 q_{1,i-1} {\bf h}_i^\top {\bf f}_{1,i}  + c_i f_{2,i},
     \label{eq:2ndsub:deriv}
\end{align}
where
\begin{align}
    & c_i  \triangleq 2 \alpha \sigma_1^2 \big( \vert {\bf h}_i^\top {\bf f}_{1,i} \vert^2    + \| {\bf f}_{1,i} \|^2  \big) +   2 (1-\alpha) p_{i-1}^2 + 2 (1-\alpha) \sigma_1^2 
    \nonumber
    \\
    & \hspace{.5cm} + 2 (1-\alpha) \sigma_1^2  \bigg( \sum_{j=i+1}^{N-1} f_{1,j,i}^2 f_{2,j+1}^2 + \sum_{k=2, \; i \ge 4}^{i-2} f_{1,i-1,k}^2 f_{2,k}^2
    \bigg)
    \nonumber
    \\
    &  \hspace{.5cm}
    + 2 \sigma_2^2 (1-\alpha) \sum_{j=1, \; i \ge 3}^{i-2} f_{1,i-1,j}^2.
    \nonumber
\end{align}
By setting the right-hand equation in \eqref{eq:2ndsub:deriv} to be zero, we obtain the solution for $f_{2,i}$ as
\begin{equation}
    f_{2,i} = -\frac{2 \alpha \sigma_1^2 q_{1,i-1} {\bf h}_i^\top {\bf f}_{1,i}}{c_i}, \quad i\in\{2,...,N-1\}.
    \label{eq:f2i}
\end{equation}

The pseudo-code of our iterative method to solve the overall optimization problem  $\bm{\mathcal{P}}$ is summarized in Algorithm \ref{al:alg}.
We solve the two sub-problems alternatively through a series of \textit{outer} iterations denoted in lines 5-17.
In the \textit{inner} iterations, lines 11-13, we solve the second sub-problem.

 \begin{algorithm}[h]
 \caption{Iterative Two-Way Optimization for Linear Encoding in Gaussian Two-Way Channels}
 \label{al:alg}
 \begin{algorithmic}[1]
 \small
\State \textbf{Input.} 
$N$, $\sigma_1^2$, $\sigma_2^2$, $\eta_1$, $\eta_2$, $\alpha$, $\epsilon$
\State \textbf{Output.} 
${\bf g}_1$, ${\bf F}_1$, ${\bf g}_2$, ${\bf F}_2$
\State Obtain the optimal solution for ${\bf g}_2$ and ${f}_{2,N}$ as
${\bf g}_2 = [0, 0, ..., \sqrt{\eta_2} \sigma_2]^\top$ and ${f}_{2,N} = 0$ from Propositions \ref{pro1} and \ref{pro2}
  \State \multiline{ Randomly generate $\{ f_{2,i} \}_{i=2}^{N-1}$}
  \While {$\vert s_{\rm new} - s_{\rm old} \vert > \epsilon$}
    \State \hspace{-3.8mm}  $\bullet$ \textit{\textbf{Sub-problem 1. Obtain ${\bf g}_1$ and ${\bf F}_1$}} 
    \State \multiline{Solve the problem in \eqref{opt:con:x} for ${\bf x} = [x_1, ..., x_N]^\top$ and obtain $q_{1,i} = \sqrt{x_i}$, $i \in \{1,...,N\}$}
    \State Obtain the columns of ${\bf F}_1$, $\{ {\bf f}_{1,i} \}_{i=1}^{N-1}$, from \eqref{eq:f1i_opt} 
    \State Obtain ${\bf g}_1 = {\bf Q}_1^{1/2} {\bf q}_1$ where ${\bf Q}_1$ is given in \eqref{eq:Q1}
     \State  \hspace{-3.8mm} \textit{$\bullet$ \textbf{Sub-problem 2. Obtain ${\bf F}_2$}} 
    \While {$\vert  \nu_{\rm new} - \nu_{\rm old} \vert > \epsilon$} 
        \State \multiline{ Obtain $f_{2,i}$ sequentially for $i \in \{2,...,N-1\}$ by \eqref{eq:f2i}
        \\
        $\nu_{\rm old} \leftarrow \nu_{\rm new}$
        \\
        Calculate the objective function value $\nu_{\rm new}$ of \eqref{opt:con} with the updated $\{ f_{2,i} \}_{i=2}^{N-1}$}
    \EndWhile
    \State \hspace{-3.8mm} $\bullet$ \textit{\textbf{Update values for stopping criterion}} 
    \State $s_{\rm old} \leftarrow s_{\rm new}$
    \State \multiline{ Calculate the objective function value $s_{\rm new}$ of \eqref{opt:con} with the updated ${\bf g}_1$, ${\bf F}_1$, and ${\bf F}_2$}
     \EndWhile
 \end{algorithmic}
 \end{algorithm}

\section{Numerical Experiments}
\label{sec:sim}

We next present numerical simulations to measure the performance of our proposed two-way optimization scheme. We consider $\sigma_1^2 = 1$, $\sigma_2^2 = 0.5$, and $\eta_1 = \eta_2 = 10$.
For our two-way optimization scheme, we consider 30 different initializations of $\{f_{2,i}\}_{i=2}^{N-1}$ with $f_{2,i} \sim \mathcal{U}(0,1)$, and select the best solution. The threshold for the stopping criterion in Algorithm \ref{al:alg} is $\epsilon = 10^{-3}$.
For performance comparisons, 
we consider two baselines.
The first baseline is the open loop scheme where each user only transmits its own message to the other without employing any feedback scheme. In this case, ${\bf F}_1 = {\bf F}_2 = {\bf 0}$, $\|{\bf g}_1\|^2 =  \eta_1 \sigma_1^2$, and $\|{\bf g}_2\|^2 = \eta_2 \sigma_2^2$.
The second baseline is the one-way optimization method that is especially designed for one-way noisy feedback channels,\footnote{In  two-way channels, two channel uses are needed to receive back the transmit signals at each user, while only a single channel use is needed in one-way channels. Therefore, the feedback scheme for the one-way channels can be applied to the two-way channels by designing the feedback scheme for User 1 over the odd/even-numbered channel uses. For User 2, the message can be transmitted only over the last channel use while the feedback information is conveyed over the even/odd-numbered channel uses without scaling.}
for which we consider the optimization scheme proposed in \cite{agrawal2011iteratively}.

Fig. \ref{fig:sumpower_alpha} depicts the weighted sum of transmit powers of the users under the varying weight  $\alpha$ in  $\bm{\mathcal{P}}$ with $N=7$.
We examine the simulation performances for $\alpha \ge  \frac{\sigma_2^2}{\sigma_1^2 + \sigma_2^2} = 0.33$.
Our proposed two-way optimization enables us to design the encoding schemes of both the users adaptively according to the value of $\alpha$.
Specifically,
for a small $\alpha \leq 0.5$,
the solution inclines toward minimizing the transmit power of User 2 in $\bm{\mathcal{P}}$. In this case, providing the feedback information from User 2 to User 1
may increase the weighted sum of transmit powers severely, which causes User 1 not to use the feedback scheme in the low $\alpha$ regime and thus the performance of our method resembles that of the open loop in Fig. \ref{fig:sumpower_alpha}.
On the other hand, as $\alpha$ increases, the problem  $\bm{\mathcal{P}}$ is more focused on minimizing the transmit power of User 1. In this case,
employing the feedback scheme will be beneficial 
since the feedback scheme allows User 1 to use lower transmit power for satisfying the SNR constraint, while it requires User 2 to use more power for providing the feedback information to User 1.
This causes a significant performance enhancement of our method as compared to baselines in Fig. \ref{fig:sumpower_alpha} upon having higher values of $\alpha$ ($0.5\leq \alpha\leq 1$).

\begin{figure}[t]
    \centering
    \includegraphics[width=\linewidth]{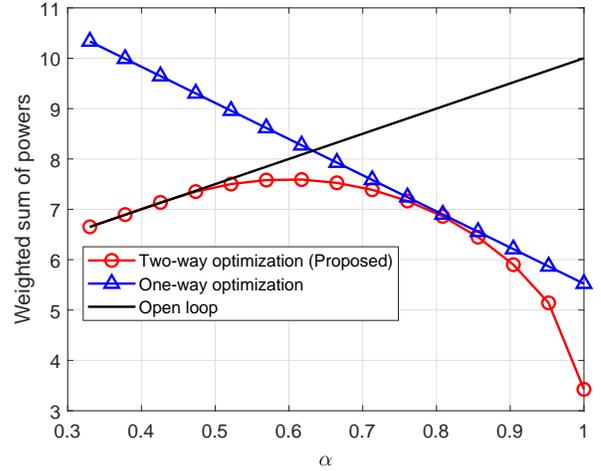}
    \caption{The weighted sum of transmit powers
    along $\alpha$.}
    \label{fig:sumpower_alpha}
\end{figure}


\begin{figure}[t]
    \centering
    \includegraphics[width=\linewidth]{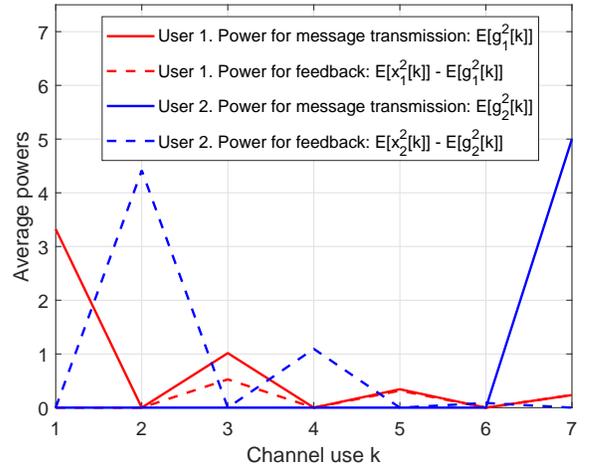}
    \caption{Power profiles for blocklength $N=7$.}
    \label{fig:power_dist}
\end{figure}

Fig. \ref{fig:power_dist} shows the power profiles for the message transmission and the feedback at User 1 and 2 with $N = 7$ and $\alpha=0.8$, which are obtained by our two-way optimization method.
We note that ${\bf g}_i[k]$ is the value of the $k$-th entry of the message encoding vector ${\bf g}_i$, $i \in \{1,2\}$.
From the figure, User 1 only uses the $1,3,5,7$-th channel uses, while User 2 only uses the $2,4,6,7$-th channel uses.
In other words, the channel uses do not overlap between User 1 and 2 except the last channel use.
It is interesting that we do not put any constraints on the separation of the channel usages between the two users when solving the optimization problem in  \eqref{opt:con}. 
However, solving the optimization problem results in the separation of the channel usages.
It can be also seen that
the transmit power of User 1
decreases along the channel uses, which resembles the results of the optimal feedback scheme for one-way noisy feedback channels \cite{agrawal2011iteratively}.
User 2 also exhibits diminishing  power consumption
 along the channel uses.

We note that User 2 conveys its message only over the last channel based on Proposition~\ref{pro2}.
It is worth mentioning that User 2 can split its power for the message transmission over the empty channel uses, i.e., $1,3,5$-th channel uses while maintaining the same objective function value and satisfying the SNR constraints.
This implies that we have multiple optimal solutions for ${\bf g}_2$ given the non-convex structure of the problem $\bm{\mathcal{P}}$. Thus, although in our problem $\bm{\mathcal{P}}$ we are concerned with minimizing the average transmit power over the channel block rather than imposing constraints on the instantaneous transmit powers, 
we may prefer to distribute the powers of ${\bf g}_2$ to mitigate the instantaneous power concentration.




\begin{figure}[t]
    \centering
    \includegraphics[width=\linewidth]{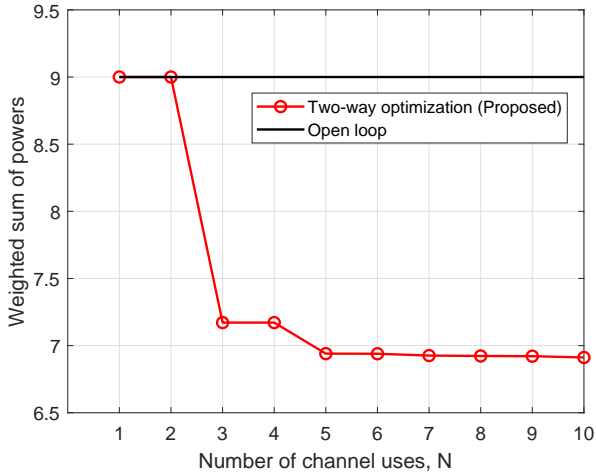}
    \caption{The weighted sum of  transmit powers
    along $N$.}
    \label{fig:sumpower_N}
\end{figure}

Fig. \ref{fig:sumpower_N} depicts the weighted sum of transmit powers under varying number of channel uses $N$ with $\alpha=0.8$. Once $N\geq 3$, User 1 can employ the feedback scheme, which decreases the weight sum of powers significantly.
For $N \ge 5$, the weighted sum of transmit powers of our method is around $23$\% lower than the open loop case.
Also, having larger number of channel uses, i.e., when $N \ge 5$, result in marginal performance gains. 
As a future work, it will be interesting to investigate the performance improvement along $N$ when instantaneous power constraints are imposed so that the users avoid to pour most of their transmit powers to a small portion of the channel uses.


\balance 
\section{Conclusion}


In this work, we presented a system model for linear coding in Gaussian two-way channels 
to bridge the gap between the well-developed model for Gaussian one-way channels proposed by Butman \cite{butman1969general} and the case of Gaussian two-way channels. We then formulated an optimization problem jointly designing the encoding/decoding schemes for the users and investigated its solution behavior. We then proposed an iterative two-way optimization solver to solve our problem.
Through simulations, we showed that our two-way optimization scheme performs better than the non-feedback scheme and the one-way optimization scheme. As a future work, imposing instantaneous transmit power constraints at the users
is worth investigating.

\bibliographystyle{IEEEtran}
\bibliography{ref}

\end{document}